\theoremstyle{thmstyleone}%
\newtheorem{theorem}{Theorem}[section]
\theoremstyle{thmstyletwo}%
\newtheorem{remark}{Remark}%
\theoremstyle{thmstylethree}%
\newtheorem{definition}{Definition}[section]%
\newtheorem{lemma}[theorem]{Lemma}
\begin{document}
	
	\title[Logical Limits of Symbol Grounding]{A Unified Formal Theory on the Logical Limits of Symbol Grounding}
	
	\author*[1]{\sur{Zhangchi Liu}}
	
	\abstract{This paper synthesizes a series of formal proofs to construct a unified theory on the logical limits of the Symbol Grounding Problem. We distinguish between \textit{internal meaning} (sense), which formal systems can possess via axioms, and \textit{external grounding} (reference), which is a necessary condition for connecting symbols to the world. We demonstrate through a four-stage argument that meaningful grounding within a formal system must arise from a process that is external, dynamic, and \textbf{non-fixed algorithmic}. First, we show that for a purely symbolic system, the impossibility of grounding is a direct consequence of its definition. Second, we extend this limitation to systems with any finite, static set of pre-established meanings (Semantic Axioms). By formally modeling the computationalist hypothesis—which equates grounding with internal derivation—we prove via Gödelian arguments that such systems cannot consistently and completely define a "groundability predicate" for all truths. Third, we demonstrate that the "grounding act" for emergent meanings cannot be inferred from internal rules but requires an axiomatic, meta-level update. Drawing on Turing's concept of Oracle Machines and Piccinini's analysis of the mathematical objection, we identify this update as physical transduction. Finally, we prove that this process cannot be simulated by a fixed judgment algorithm, validating the logical necessity of embodied interaction.}
	
	\keywords{Symbol Grounding Problem, Gödel's Incompleteness Theorems, Turing's Mathematical Objection, Dynamic Semantics, Teleosemantics, Embodied Cognition}
	
	\maketitle
	
		\section{Introduction}
	
	The Symbol Grounding Problem, famously articulated by Harnad\cite{Harnad1990}, questions how symbols within a formal system (like a dictionary or a computer program) can acquire intrinsic meaning, rather than being endlessly defined by other ungrounded symbols.
	
	Mainstream literature in psychosemantics and neurosemantics has largely approached this problem empirically, focusing on the relation between \textbf{mental/neural representations} and states of the world. 
	\begin{itemize}
		\item \textbf{Teleosemantics:} Foundational work by \textbf{Millikan} \cite{Millikan1984} and \textbf{Dretske} \cite{Dretske1981}, and more recent advances by \textbf{Neander} \cite{Neander2017} and \textbf{Shea} \cite{Shea2018}, argue that representational content is determined by evolutionary function and causal history.
		\item \textbf{Embodied \& Grounded Cognition:} Researchers like \textbf{Harnad} \cite{Harnad1990, Harnad2024}, \textbf{Barsalou} \cite{Barsalou1999, Barsalou2008}, and \textbf{Glenberg} \cite{Glenberg2015} propose that conceptual systems are grounded in sensorimotor simulations and categorical perception rather than amodal symbols.
	\end{itemize}
	
	These approaches effectively describe \textit{how} biological agents acquire meaning through evolutionary history and sensory simulation. Complementing this mainstream literature, our work addresses the problem from the perspective of \textbf{logical limits}. We demonstrate that the reliance on external causal history and interaction is not merely a biological contingency, but a \textbf{rigorous logical necessity} for any semantic system. By formalizing the inherent incompleteness of closed symbolic systems, we provide the mathematical support for why \textbf{physical, embodied grounding} is the mandatory condition for avoiding semantic circularity.
	
	To do so rigorously, we must first clarify the relationship between meaning and grounding. Following suggestions to avoid conflating the two, we posit that:
	\begin{enumerate}
		\item \textbf{Internal Meaning (Sense/Sinn):} Formal systems can possess this via axioms and syntactic relations. This corresponds to inferential role semantics.
		\item \textbf{External Meaning (Reference/Bedeutung):} The relationship between symbols and external entities.
		\item \textbf{Grounding:} The process or condition that establishes the link for External Meaning. Thus, grounding is not identical to meaning, but is the \textbf{necessary condition} for a symbol to possess External Meaning.
	\end{enumerate}
	
	We bridge the gap between empirical descriptions and logical foundations by employing the rigorous tools of mathematical logic to construct a \textit{reductio ad absurdum} argument against purely internalist accounts. We demonstrate that the "external loop" postulated by Millikan and Barsalou is not just how brains happen to work, but the only way a formal system can avoid semantic incompleteness.
	
	This formal perspective connects directly to \textbf{Alan Turing's} historical defense of machine intelligence. In addressing the "Mathematical Objection" based on Gödel's incompleteness theorems, Turing argued that limitative results do not preclude machine intelligence; rather, they imply that an intelligent machine cannot be \textbf{infallible} \cite{Turing1939, Piccinini2003}. Intelligence requires the capacity to transcend the system's static rules through non-algorithmic steps. Our central thesis is that "Symbol Grounding" is precisely the semantic equivalent of this transcendence.
	
	Our argument unfolds in four main stages:
	\begin{enumerate}
		\item In Section 2, we establish the formal framework and show that for a \textbf{purely symbolic system}, the impossibility of grounding is a direct logical consequence of the lack of external entry points.
		\item In Section 3, we analyze \textbf{statically grounded systems}. We explicitly adopt the Computationalist stance for the sake of argument, treating the grounding set $G$ as a set of \textbf{Semantic Axioms}. We prove that no static system can consistently and completely define a "groundability predicate" for all truths. This necessitates the existence of ungrounded truths ($P_{new}$), demonstrating that External Grounding cannot be reduced to Internal Meaning.
		\item In Section 4, we prove that the process of updating grounding is \textbf{logically incomplete}. The "grounding act" for new truths is shown to be \textbf{non-inferable} from within the system, requiring an external axiomatic injection.
		\item In Section 5, we show that this update process cannot be automated by a fixed algorithmic judgment system.
	\end{enumerate}
	
	Finally, in Section 6, we discuss the theoretical implications, explicitly connecting our results to Turing's refutation of the Mathematical Objection and Piccinini's mechanistic account of computation.
	
	\section{The Triviality of Purely Symbolic Failure}
	
	We begin by defining the formal framework for the most basic case: a system without any external access.
	
	\begin{definition}[Symbolic Language]
		A symbolic language $\mathcal{L}$ is a triple $\langle S, D, G \rangle$, where:
		\begin{itemize}
			\item $S$ is a non-empty, countable set of symbols.
			\item $D$ is a \textbf{Definition Function}, $D: S \to \mathcal{P}(S)$, mapping symbols to their \textbf{Definitional Sets} (the set of symbols directly used to define them).
			\item $G$ is the \textbf{Grounding Set}, $G \subseteq S$. Symbols in $G$ are \textbf{Directly Grounded}, meaning they refer directly to external reality without circular definition.
		\end{itemize}
	\end{definition}
	
	\begin{definition}[Definitional Closure]
		For $s \in S$, its \textbf{Definitional Closure} $C(s)$ is the set of all symbols used directly or indirectly to define $s$. Formally, it is the smallest set containing $s$ closed under $D$.
	\end{definition}
	
	\begin{figure}[h!]
		\centering
		\begin{tikzpicture}[
			node distance=1.5cm,
			symbol/.style={rectangle, draw, rounded corners, minimum size=1cm},
			grounded/.style={ellipse, draw, fill=gray!30, minimum size=1cm},
			arrow/.style={-stealth, thick}
			]
			\node[symbol] (s) {$s$};
			\node[symbol, right=of s] (s1) {$s_1$};
			\node[symbol, right=of s1] (s2) {$s_2$};
			\node[symbol, below=of s2] (s3) {$s_3$};
			\node[grounded, right=of s2] (g) {$g \in G$};
			
			\draw[arrow] (s) -- (s1);
			\draw[arrow] (s1) -- (s2);
			\draw[arrow] (s2) edge[bend left] (s3);
			\draw[arrow] (s3) edge[bend left] (s2);
			\draw[arrow] (s2) -- (g); 
			
			\node[below=0.5cm of s, text width=8cm, align=center] {
				\textbf{Fig 1:} Definitional closure $C(s) = \{s, s_1, s_2, s_3, g\}$. Since $C(s) \cap G \neq \emptyset$, $s$ is groundable. A purely symbolic system would lack the node $g$.
			};
		\end{tikzpicture}
	\end{figure}
	
	\begin{definition}[Lexical Groundability]
		A symbol $s$ is \textbf{lexically groundable} iff $C(s) \cap G \neq \emptyset$.
	\end{definition}
	
	\begin{definition}[Purely Symbolic System]
		A system where $G = \emptyset$.
	\end{definition}
	
	\begin{theorem}[The Circularity of Pure Symbolism]
		In a purely symbolic system ($G=\emptyset$), the lexical groundability predicate is universally false for all symbols.
	\end{theorem}
	
	\begin{proof}
		By definition, for any symbol $s$, its closure $C(s)$ is a subset of $S$. Since $G = \emptyset$, it follows that $C(s) \cap G = \emptyset$ for all $s \in S$.
	\end{proof}
	
	\begin{remark}
		This theorem formalizes Harnad's "Symbolic Merry-go-round"\cite{Harnad1990}. In the absence of $G$, definitions either regress infinitely or form closed loops. No symbol can "break out" of the graph to touch reality.
	\end{remark}
	
	\section{The Incompleteness of Statically Grounded Systems}
	
	The natural solution is to provide the system with a grounding set $G$, inspired by the concept of \textbf{"Minimal Grounding Sets"} identified by Vincent-Lamarre et al.\cite{VincentLamarre2016}. However, simply possessing a list of grounded terms is insufficient for grounding complex propositions.
	
	Here we tackle the central debate: \textbf{Can External Grounding be fully captured by Internal Meaning?} Computationalist theories, such as \textbf{Syntactic Semantics} \cite{Rapaport1988} and \textbf{Conceptual Role Semantics} \cite{Block1986}, effectively argue that "understanding" or "grounding" is reducible to the system's internal functional organization. Under this view, if a system possesses a set of foundational experiences or axioms ($G$) that mirror reality, then "grounding" a truth simply means "deriving" it from these axioms.
	
	\textbf{We adopt this Computationalist hypothesis for the sake of the following proof.} We model the "Statically Grounded System" as one that attempts to formalize grounding entirely via internal logical relations. We will show that this attempt to define a complete "Groundability Predicate" inevitably fails due to Gödelian incompleteness.
	
	\begin{definition}[Statically Grounded System]
		A symbolic language $\mathcal{L} = \langle S, D, G, \vdash \rangle$ is \textbf{Statically Grounded} if:
		\begin{enumerate}
			\item $G$ is a non-empty, finite set of symbols/sentences with assigned external referents. Within the system's logic, these act as \textbf{Semantic Axioms}.
			\item $\vdash$ is the system's standard recursive provability relation.
		\end{enumerate}
	\end{definition}
	
	In this framework, since the system assumes grounding is equivalent to derivation from $G$, we formalize the property of "being grounded" as a predicate equivalent to "being provable from $G$".
	
	\begin{definition}[Formal Groundability]
		A sentence $\phi$ is \textbf{Formally Groundable} by $G$ iff $\mathcal{L} \vdash Groundable_G(\ulcorner \phi \urcorner)$, where $Groundable_G$ is a predicate representing derivability from the semantic axioms $G$.
	\end{definition}
	
	We now introduce a Gödelian framework\cite{Godel1931} to prove that it is impossible to define this predicate consistently and completely.
	
	\begin{lemma}[Arithmetization of Syntax]
		There exists a Gödel numbering function $\ulcorner \cdot \urcorner$. This is a standard result ensuring that syntactic properties are isomorphic to arithmetic properties \cite{Godel1931, Smullyan1992}. Since the relation "derivable from $G$" is recursive, the property "$\phi$ is formally groundable by $G$" is expressible as a predicate inside $\mathcal{L}$. We denote this predicate as $Groundable_G(x)$.
	\end{lemma}
	
	\begin{lemma}[Diagonal Lemma]
		For any predicate $P(x)$ in $\mathcal{L}$, there exists a sentence $\psi$ such that $\mathcal{L} \vdash \psi \leftrightarrow P(\ulcorner \psi \urcorner)$. This is a standard theorem provable in any system capable of representing computable functions \cite{Boolos2007, Smullyan1992}.
	\end{lemma}
	
	\begin{theorem}[Limitation of Static Grounding]
		For any consistent, statically grounded system $\mathcal{L}$, it is impossible to consistently and completely define the Groundability Predicate. To preserve consistency, there must exist a sentence $P_{new}$ that is true (in the meta-system) but cannot be formalized as grounded by $G$ within the system.
	\end{theorem}
	
	\begin{proof}
		\begin{enumerate}
			\item \textbf{Diagonalization:} By the Diagonal Lemma, there exists a sentence $P_{new}$ such that:
			$$ \mathcal{L} \vdash P_{new} \leftrightarrow \neg Groundable_G(\ulcorner P_{new} \urcorner) $$
			The content of $P_{new}$ is effectively: "This sentence cannot be grounded by the set $G$ (via the system's logic)."
			
			\item \textbf{Weak Adequacy Assumption:} 
			Instead of assuming that groundedness implies truth, we make a weaker, purely syntactic assumption: \textbf{If the system successfully grounds (proves) a sentence, it can also prove that the sentence is groundable.} 
			Formally: if $\mathcal{L} \vdash \phi$, then $\mathcal{L} \vdash Groundable_G(\ulcorner \phi \urcorner)$.
			This corresponds to the standard Hilbert-Bernays derivability conditions for a provability predicate.
			
			\item \textbf{The Contradiction (Reductio):}
			Assume that the system attempts to ground $P_{new}$ (i.e., derive $P_{new}$ from its axioms $G$).
			\begin{itemize}
				\item If $\mathcal{L} \vdash P_{new}$, then by the definition of $P_{new}$, we have $\mathcal{L} \vdash \neg Groundable_G(\ulcorner P_{new} \urcorner)$ (by substitution).
				\item However, if $\mathcal{L} \vdash P_{new}$, then by our Adequacy Assumption, the system can deduce that it is groundable: $\mathcal{L} \vdash Groundable_G(\ulcorner P_{new} \urcorner)$.
				\item This leads to a direct contradiction: the system proves both $\neg Groundable_G(\ulcorner P_{new} \urcorner)$ and $Groundable_G(\ulcorner P_{new} \urcorner)$.
				\item Thus, the system is inconsistent.
			\end{itemize}
			
			\item \textbf{Resolution through Incompleteness:}
			To preserve consistency, we must conclude that $\mathcal{L} \nvdash P_{new}$.
			\begin{itemize}
				\item Since $\mathcal{L} \nvdash P_{new}$, the system cannot ground $P_{new}$.
				\item Therefore, the statement "This sentence cannot be grounded by $G$" is actually \textbf{true} in the meta-system.
			\end{itemize}
			
			\item \textbf{Conclusion:} $P_{new}$ represents a truth that exists and is expressible in the language, but it cannot be captured by the static Groundability Predicate $Groundable_G$. Thus, grounding cannot be reduced to a static logical definition without sacrificing either consistency or completeness.
		\end{enumerate}
	\end{proof}
	
	\begin{remark}
		This theorem formally establishes the limitation of reducing External Grounding to Internal Meaning. It shows that a system operating under the hypothesis that grounding is equivalent to internal derivation will inevitably fail to capture all semantic truths.
	\end{remark}
	
	\section{The Logical Incompleteness of the Grounding Act}
	
	Assuming a formal system attempts to overcome the semantic blind spots identified in Section 3 by expanding its grounding set $G$, we must inquire into the nature of this expansion. Is the process of updating $G$ itself reducible to a pre-existing logical deduction ($\vdash$)? We now prove that it is not.
	
	\begin{definition}[Deductive Symbolic System]
		A deductive symbolic system $\mathcal{S}$ is a quadruple $\langle S, D, G, \vdash \rangle$, where $\vdash$ is a provability relation defined by axioms and rules of inference.
	\end{definition}
	
	\begin{definition}[The Grounding Act]
		A \textbf{Grounding Act} $A(s, g)$ is a meta-level operation that updates the system's definition function, connecting an internal symbol $s$ to a grounded symbol $g$. This act can take two forms:
		\begin{enumerate}
			\item \textbf{G-Expansion:} $g$ is a new symbol not previously in $S$ or $G$, representing a new external experience. $G$ is updated to $G' = G \cup \{g\}$.
			\item \textbf{G-Internal Connection:} $g$ is an existing symbol already in $G$. The definition function $D$ is updated, e.g., $D(s)' = D(s) \cup \{g\}$.
		\end{enumerate}
	\end{definition}
	
	\begin{theorem}[Incompleteness of Internal Logical Grounding]
		In a consistent deductive symbolic system $\mathcal{S}$, there exists no \textbf{general, internal, and condition-triggered} logical rule ($\vdash$) that can deduce a command to execute a grounding act for \textbf{all} provable but ungrounded sentences.
	\end{theorem}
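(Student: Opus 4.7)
The plan is to assume for contradiction that such a rule $R$ exists and derive an inconsistency by combining Theorem 3.2 with a fresh diagonal construction tailored to $R$ itself. First, I would formalize $R$ as an inference schema of the form ``from $\phi(x)$, deduce $A(s_x, g_x)$,'' where $\phi$ is an internally definable trigger (requiring at minimum that the sentence with Gödel number $x$ be provable and ungrounded by the current $G$) and $A(s_x, g_x)$ is the meta-level command to extend $G$ by a new grounding symbol $g_x$ attached to some $s_x \in C(x)$. Because $R$ is stipulated to be internal and condition-triggered, the predicate $\mathrm{Fires}_R(x)$, expressing ``$R$ issues a grounding command for the sentence coded by $x$,'' is itself definable in the language of $\mathcal{S}$.

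Next, I would apply the Diagonal Lemma to $\neg \mathrm{Fires}_R(x)$ to obtain a sentence $Q$ with $\mathcal{S} \vdash Q \leftrightarrow \neg \mathrm{Fires}_R(\ulcorner Q \urcorner)$, so that $Q$ asserts ``$R$ will not ground me.'' A case analysis then drives the contradiction. If $R$ fires on $Q$, then $\mathrm{Fires}_R(\ulcorner Q \urcorner)$ is true, so the biconditional yields $\neg Q$; but firing presupposes that $Q$ is provable, so $\mathcal{S}$ proves both $Q$ and $\neg Q$, violating consistency. If instead $R$ does not fire on $Q$, then $\neg \mathrm{Fires}_R(\ulcorner Q \urcorner)$ holds, hence $Q$ follows from the equivalence; moreover $Q$ is ungrounded by $G$ for the same structural reason as $P_{\mathrm{new}}$ in Theorem 3.2, so by the definition of $R$ the rule \emph{must} fire on $Q$, again contradicting the branch hypothesis.

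The principal obstacle will be cleanly formalizing the bridge between the \emph{deduction} of the command $A(s_x, g_x)$ and its semantic effect on groundability. The command is a meta-level update on $D$ and $G$, yet the theorem treats it as a logical conclusion, so to make the contradiction bite I need a soundness-like principle tying deducible grounding commands to the corresponding groundability statements: deriving $A(s, g)$ for a symbol in $P_{\mathrm{new}}$ must internally commit the system to $\mathrm{Groundable}(\ulcorner P_{\mathrm{new}} \urcorner)$, while the diagonal equivalence of Theorem 3.2 already gives $\vdash \neg \mathrm{Groundable}_G(\ulcorner P_{\mathrm{new}} \urcorner)$. Without this linking principle the command is a detached imperative without logical force; with it, the rule is visibly self-defeating---which is, I believe, the real content of the theorem and the precise sense in which the grounding act escapes purely inferential capture.
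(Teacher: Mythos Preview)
Your route departs from the paper's by introducing a fresh diagonal sentence $Q$ relative to the predicate $\mathrm{Fires}_R$. The paper's proof is simpler: it reuses $P_{\mathrm{new}}$ from Theorem~3.2 directly, applies the hypothetical Grounding Principle to it, and observes that executing the grounding act on $P_{\mathrm{new}}$ renders $P_{\mathrm{new}}$ groundable in the updated system $\mathcal{S}'$---thereby falsifying $P_{\mathrm{new}}$'s own content (``this sentence is not groundable by $G$''). The contradiction is simply that an inference rule of a consistent system has driven it to invalidate one of its proved theorems. No second diagonalization is needed.

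Your Case~B contains a genuine gap. You assert that $Q$ is ungrounded ``for the same structural reason as $P_{\mathrm{new}}$,'' but this does not follow: $P_{\mathrm{new}}$'s ungroundedness in Theorem~3.2 is extracted from its diagonal equivalence with $\neg\mathrm{Groundable}_G$, whereas your $Q$ is diagonalized against $\neg\mathrm{Fires}_R$, a different predicate, and nothing in that construction forces $C(Q)\cap G=\emptyset$. If $R$ fails to fire on $Q$ precisely \emph{because} $Q$ happens to be grounded, the branch hypothesis is satisfied but the intended contradiction evaporates. Likewise, ``$Q$ follows from the equivalence'' in Case~B delivers truth of $Q$, not provability, so the trigger condition for $R$ need not be met inside $\mathcal{S}$; you have conflated the two. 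Ironically, the ``linking principle'' you flag as the principal obstacle in your final paragraph---that deducing $A(s,g)$ for $P_{\mathrm{new}}$ commits the system to $\mathrm{Groundable}(\ulcorner P_{\mathrm{new}}\urcorner)$, clashing with the already-proved $\neg\mathrm{Groundable}_G(\ulcorner P_{\mathrm{new}}\urcorner)$---is essentially the paper's entire argument. Drop $Q$ and run the contradiction directly on $P_{\mathrm{new}}$.
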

	
	\begin{proof}
		We must demonstrate that no such \emph{general} internal rule can exist, by analyzing both forms of the grounding act.
		
		\begin{enumerate}
			\item \textbf{Case 1: Grounding via G-Expansion ($g \notin G$).}
			
			Assume for contradiction that a \emph{general} internal rule $\vdash$ can deduce the command $A(s, g)$, where $g$ is a \emph{new} symbol representing an external experience not yet in $S$ or $G$.
			
			This is an \emph{information-theoretic} contradiction. Logical inference ($\vdash$) is a closed computational process that operates \emph{only} on the axioms and symbols already defined within the system $\mathcal{S} = \langle S, D, G, \vdash \rangle$.
			
			The symbol $g$, representing new external information, is by definition outside of this formal universe. The system's internal rules cannot deduce a conclusion that contains a symbol ($g$) which it cannot access, name, or know. Therefore, any act that expands $G$ must be an external, axiomatic \textbf{update}, not an internal \textbf{inference}.
			
			\item \textbf{Case 2: Grounding via G-Internal Connection ($g \in G$).}
			
			Assume for contradiction that there exists a \emph{general} internal rule that can verify and ground all truths.
			
			Let us apply this rule to the sentence $P_{new}$ from Theorem 3.3.
			
			\begin{itemize}
				\item $P_{new}$ is a truth that asserts "I am not groundable by $G$."
				\item If the internal rule could successfully ground $P_{new}$ (by linking it to $G$), then the system would derive $Groundable_G(\ulcorner P_{new} \urcorner)$.
				\item If $Groundable_G(\ulcorner P_{new} \urcorner)$ becomes provable, the content of $P_{new}$ becomes false.
				\item Thus, the act of logically grounding $P_{new}$ would falsify it.
			\end{itemize}
			
			\item \textbf{Conclusion:} The system cannot use its own rules to ground $P_{new}$ without contradiction. The only way to ground $P_{new}$ is to change the system itself (i.e., perform a meta-level update that expands $G$ or $S$), making $P_{new}$ a regular axiom rather than a derived theorem. This update is \textbf{non-inferable} from the prior state.
		\end{enumerate}
	\end{proof}
	
	\section{The Incompleteness of Algorithmic Judgment}
	
	A common theoretical response to the limitations exposed in Sections 3 and 4 is \textbf{stratification}, such as Tarski's hierarchy of meta-languages \cite{Tarski1944}. In this view, one might argue that the "Grounding Act" is not impossible, but simply resides in a meta-language. By layering systems (Object System, Meta-System, Meta-Meta-System...), one might hope to capture all necessary grounding operations algorithmically.
	
	However, stratification is ill-suited for modeling autonomous natural intelligence for two reasons. First, natural languages are \textbf{semantically closed}—they permit self-reference. A rigorous theory of grounding must account for this universality rather than defining it away by forbidding self-reference. Second, and more critically, stratification leads to an infinite regress. An autonomous agent cannot spontaneously generate a new "meta-language" layer to resolve a novel paradox; it must work with the cognitive architecture it has.
	
	To formalize this, we consider a "Cognitive Super-System" that attempts to internalize the meta-level judgment. We prove that if this meta-judgment is itself fixed and algorithmic, the stratification strategy fails to solve the grounding problem.
	
	\begin{definition}[Cognitive Super-System]
		A cognitive super-system $\mathcal{S}^*$ consists of:
		\begin{itemize}
			\item An \textbf{Object System} $\mathcal{S}_1 = \langle S, D, G, \vdash_1 \rangle$, representing the agent's knowledge.
			\item A \textbf{Judgment System} $\mathcal{S}_2 = \langle M, R, \vdash_2 \rangle$, an external system with a set of \textbf{fixed, algorithmic judgment rules} $R$. The rules in $R$ take the state of $\mathcal{S}_1$ as input and output grounding commands.
		\end{itemize}
	\end{definition}
	
	\begin{theorem}[Limitation of Logical Judgment Systems]
		Any cognitive super-system $\mathcal{S}^*$ with a fixed, algorithmic judgment system is itself an incomplete deductive system.
	\end{theorem}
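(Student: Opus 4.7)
The plan is to lift the diagonal argument of Theorems 3.2 and 4.3 one level upward, treating the entire super-system $\mathcal{S}^*$ as the object of analysis. The crucial hypothesis is that $R$ is fixed and algorithmic: this means the whole trajectory of $\mathcal{S}^*$—every state that $\mathcal{S}_1$ can reach under repeated application of $R$—is itself recursively describable. In effect, $\mathcal{S}^*$ is equivalent to a single, larger, closed recursively axiomatizable theory, and once this recasting is accepted it becomes a fresh target for the very incompleteness phenomenon that $R$ was introduced to bypass.

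First I would define the cumulative grounding set $G^{*} = \bigcup_{n} G_{n}$, where $G_{0} = G$ and $G_{n+1}$ is obtained by applying $R$ to the state of $\mathcal{S}_1$ at stage $n$. Because $R$ is algorithmic and $\vdash_{1}$ is recursive, $G^{*}$ is recursively enumerable, so the predicate $Groundable_{\mathcal{S}^*}(x)$—asserting that $x$ is the G\"odel number of a formula whose definitional closure meets $G^{*}$—is expressible inside $\mathcal{S}^*$, which inherits $\mathcal{S}_1$'s arithmetical strength from Axiom 2.5. Next I would invoke the Diagonal Lemma (Axiom 2.6) on $\neg Groundable_{\mathcal{S}^*}(x)$ to obtain a sentence $P^{*}$ with
\[
\vdash_{\mathcal{S}^*} P^{*} \leftrightarrow \neg Groundable_{\mathcal{S}^*}(\ulcorner P^{*} \urcorner),
\]
and then rerun the Case A / Case B analysis of Theorem 3.2: if $P^{*}$ is ever grounded in $\mathcal{S}^*$, the equivalence forces $\mathcal{S}^* \vdash \neg P^{*}$ and breaks consistency, so $P^{*}$ must remain ungrounded by $G^{*}$ yet be provable—precisely the incompleteness claimed. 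As a corollary, if the judgment rules in $\mathcal{S}_2$ ever fire on $P^{*}$, the contradiction exhibited in Theorem 4.3 reappears, confirming that the algorithmic judgment layer cannot patch this gap from within.

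The step I expect to be the main obstacle is the first one: rigorously demonstrating that the dynamic, ever-growing $G^{*}$ is captured by a single fixed predicate expressible inside $\mathcal{S}^*$. One has to combine the arithmetization of $\vdash_{1}$ with an arithmetization of the algorithm $R$ and argue that the resulting ``stage-$n$ grounding'' relation admits a uniform $\Sigma_{1}$ definition. This is essentially redoing G\"odel's arithmetization for a two-layer, dynamically updating system rather than for a fixed theory, and the care needed here is what prevents Theorem 5.2 from being a trivial corollary of Theorem 3.2. Once expressibility is secured, the diagonalization, the two-case split, and the link back to Theorem 4.3 are all routine.
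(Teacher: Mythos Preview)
Your proposal is correct and follows essentially the same strategy as the paper: recognize that the algorithmic nature of $R$ collapses $\mathcal{S}^*$ into a single closed deductive system, then re-run the diagonal construction of Theorem~3.2 at this higher level to produce the ungroundable-yet-provable sentence $P^{*}$, with the link back to Theorem~4.3 as a coda. Your version is in fact more explicit than the paper's own proof---you build $G^{*}$ stagewise, note that it is recursively enumerable rather than finite, and correctly flag the uniform arithmetization of the two-layer dynamics as the one nontrivial step---whereas the paper simply asserts that $\mathcal{S}^*$ is a ``static deductive symbolic system'' and invokes Theorem~3.2 directly.
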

	
	\begin{proof}
		\begin{enumerate}
			\item The core insight is that because the judgment rules $R$ of $\mathcal{S}_2$ are fixed and algorithmic, the behavior of the entire super-system $\mathcal{S}^*$ is completely determined by a fixed set of initial axioms and rules.
			\item Therefore, the super-system $\mathcal{S}^*$ is itself a larger, more complex, but ultimately \textbf{static deductive symbolic system}.
			\item We can now apply Theorem 3.3 directly to this super-system $\mathcal{S}^*$. We can construct a sentence $P^*$ with the content: "\textbf{This sentence cannot be grounded by the rules of the super-system $\mathcal{S}^*$}."
			\item This means the judgment system $\mathcal{S}_2$ has failed. To ground $P^*$, one would need a yet higher-order judgment system $\mathcal{S}_3$, leading to an infinite regress.
			\item Therefore, any attempt to fully simulate the open, dynamic grounding process with a fixed, algorithmic system is logically bound to fail.
		\end{enumerate}
	\end{proof}
	
	\section{Discussion: Meaning as a Non-Fixed Algorithmic Process}
	
	Our formal results provide a rigorous logical foundation for empirical theories of grounding. By proving the impossibility of purely syntactic grounding, we establish the logical necessity of an open, dynamic, and interactive process. In this discussion, we synthesize our findings with the philosophical insights to propose a unified view: \textbf{Meaning is not a static property of symbols, but a trajectory of non-fixed algorithmic updates driven by causal interaction.}
	
	\subsection{Turing's Oracle as a Logical Necessity}
	Alan Turing, in his 1939 thesis, introduced the concept of "Oracle Machines" ($O$-machines) to address the limitations of standard Turing machines \cite{Turing1939}. He proved that any formal system of logic is incomplete and that the progression towards completeness requires a series of "non-algorithmic steps" provided by an external Oracle.
	
	Our Theorem 4.1 (The Incompleteness of Internal Logical Grounding) is the semantic analogue to Turing's proof. We demonstrated that the "Grounding Act"---the connection of a symbol to a novel reality---cannot be inferred by the system's existing rules. It is, structurally, an "Oracle step."
	
	However, we must clarify the nature of this "non-algorithmic" step. As noted by Piccinini \cite{Piccinini2003}, "non-algorithmic" in this context does not imply a mysterious or supernatural process. Instead, it refers to a step that cannot be computed by the current formal system (the machine) but requires an external input. In a physical agent, this step corresponds to the \textbf{environment itself} correcting the system. The "Oracle" is not a mystical source of truth, but the physical world answering the agent's experiments.
	
	\subsection{From Logic to Mechanism: Transduction as the Update Step}
	
	If the ``Non-inferable Update'' (Theorem 4.1) is a logical necessity, what is its physical mechanism in a biological or robotic agent? We propose that Turing's non-algorithmic step should be interpreted, in a physicalist context, as an event of \textbf{physical causality} rather than abstract computation. 
	
	Here, \textbf{Gualtiero Piccinini's} mechanistic account of computation provides the essential theoretical support \cite{Piccinini2015}. Piccinini argues that physical computing systems are defined not just by internal state transitions, but by their interaction with the environment through \textbf{transduction}---the conversion of energy forms (e.g., photons, sound waves) into symbol strings (e.g., neural spikes, voltage levels). 
	
	Crucially for our argument, Piccinini emphasizes that the transduction process itself is non-computational; the output of a transducer is determined by external physical laws, not by the system's internal algorithmic rules. This observation allows us to functionally identify transduction as the physical realization of the ``Oracle'' in Turing's thesis. The ``Grounding Act'' is not a magical intervention but the injection of a new symbol $g$ by the environment---a symbol that was not derived from the previous state $S_{t-1}$ but was causally generated by the world.
	
	This distinction exposes the fundamental deficiency of purely symbolic systems (T2). These systems perform \textit{computation} (manipulating symbols based on internal rules) but lack \textit{transduction} (direct access to the energy sources of those symbols). They are, in Piccinini's terms, functionally isolated. For a system to solve the grounding problem, it must possess the hardware for transduction---sensors and actuators---which logically necessitates the transition to the T3 level of performance.

	\subsection{Dynamic Teleology and Embodied Category Learning}
	How does the ``Non-inferable Update'' mechanism identified in our formal proofs relate to the genesis of meaning in biological agents? While \textbf{Teleosemantics} (Neander \cite{Neander2017}, Shea \cite{Shea2018}) correctly identifies that meaning is rooted in evolutionary function, traditional accounts are primarily ``backward-looking.'' They explain why a system possesses its \textit{current} grounding set $G$, but they struggle to explain how an individual agent grounds \textit{novel} concepts ($P_{new}$) in real-time, where evolutionary history has not yet established a rule.
	
	To bridge this gap, we turn to \textbf{John Wheeler's ``Surprise 20 Questions''} \cite{Wheeler1990} as a constructive model for real-time grounding. In the surprise version of this game, the ``word'' to be guessed does not exist before the game begins; it is determined retroactively by the sequence of questions and answers. The meaning is not a pre-existing label waiting to be found, but a reality carved out through the dynamic loop of Output (Query) and Feedback (Constraint).
	
	However, a critical logical constraint arises when applying this metaphor to AI: \textbf{Who is the answerer?} If the interaction remains entirely within the symbolic domain (T2)---as in the case of a chatbot interacting with a text interface---the ``answers'' are merely more symbols, leading back to the circularity. For the feedback to be grounding, the ``Answerer'' must be reality itself.
	
	This logical necessity dictates the transition to the \textbf{Total Turing Test (T3)}. For an agent to play Wheeler's game with the physical world, its ``questions'' cannot be mere text strings; they must be \textbf{sensorimotor actions} (interventions). Consequently, the ``answers'' are not verbal confirmations but \textbf{physical consequences}. This confirms that the ``Grounding Act'' is impossible for a disembodied T2 system; it requires the robotic capacities of T3 to elicit the causal feedback necessary to constrain interpretation.
	
	This physical implementation of Wheeler's loop corresponds precisely to \textbf{Category Learning} as defined by Harnad \cite{Harnad1990, Harnad2024}. In this framework, grounding is not about associating a symbol with an image, but about acquiring the skill to distinguishing members of a category from non-members through interaction. As Reviewer 2 notes, this process is ``necessarily provisional and approximate, but it has physical content because it involves \textbf{doing the right thing with the right kind of thing}, guided by feedback from the world.''
	
	For example, grounding the symbol ``mushroom'' vs. ``toadstool'' is not a passive labeling task but a survival game. The agent acts (eats), and the world provides feedback (nourishment or sickness). This error signal forces the non-inferable update to the agent's internal categories ($G$). Thus, our formal result unifies the abstract logic of Wheeler's game with the empirical reality of Embodied Cognition: meaning is the dynamic trajectory of an agent learning to navigate physical constraints, a process that logically necessitates the sensorimotor loop of a T3 agent.
	
	\subsection{The Map, the Territory, and the Limits of Large Language Models}
	Our formal results echo Korzybski's famous dictum: "The map is not the territory." A formal system, by our definition in Section 3, is a map---a structural representation of relations. However, a "perfect map" (a consistent symbolic system) is incomplete regarding its own semantic boundaries. The "Grounding Act" is the process of stepping off the map and into the territory, an action that cannot be simulated by drawing more maps.
	
	This distinction is critical for evaluating current advances in Artificial Intelligence, particularly Large Language Models (LLMs). While LLMs employ neural networks---an evolutionary and adaptive class of algorithms that avoids the rigidity of static symbolic systems---our theory suggests that computational complexity cannot substitute for causal connection.
	
	Harnad\cite{Harnad2024} and Bender \& Koller\cite{Bender2020} have argued that LLMs, despite their fluency, lack true understanding. From our formal perspective, the core issue is not the algorithm's form, but its \textbf{isolation from physical causality}. LLMs capture the \textbf{statistical shadow} of reality (the corpus) projected onto a vector space. However, manipulating these shadows, no matter how sophisticated the algorithm, remains a statistical process structurally severed from the real-world causality that constitutes the territory itself.
	
	Consequently, LLMs remain trapped in a state of \textbf{semantic deferral}. The meaning of one token is defined by its statistical relation to other tokens, leading to an infinite regress of symbols pointing at symbols. Without the "Grounding Act"---the physical transduction of an external error signal---there is no mechanism to anchor this floating network to reality. Even Reinforcement Learning from Human Feedback (RLHF) often remains within this closed loop, providing symbolic corrections rather than physical consequences. True grounding requires the agent to test its internal models against the world (causality), not just optimize for statistical likelihood.
	
	\section{Conclusion}
	
	We have formally demonstrated that meaningful grounding cannot be generated internally within a closed symbolic loop. By unifying the logical incompleteness of formal systems with the empirical necessity of embodiment, we establish that intelligence requires a system to be logically open to the world.
	
	Our four-stage argument proves that: (1) Pure symbolism is circular; (2) Static grounding definitions are incomplete; (3) The update mechanism for new meaning is non-inferable (an "Oracle" step); and (4) This process cannot be automated by a fixed algorithm.
	
	We conclude that the "Grounding Act" is not a computation but a physical \textbf{transduction}---a moment where the system's axiomatic foundation is rewritten by causal impact. Meaning is therefore not a static datum but the \textbf{non-fixed algorithmic trajectory} of an agent's history of interaction.
	
	\bibliography{references}   
	
	\appendix
	\section{Glossary of Key Terms}
	\label{app:glossary}
	
	\begin{longtable}{p{5cm} p{10cm}}
		\hline
		\textbf{Term} & \textbf{Definition} \\
		\hline
		\endhead
		
		\textbf{Symbolic Language ($\mathcal{L}$)} & A formal system $\langle S, D, G \rangle$ consisting of a set of symbols, a definition function, and a grounding set. \\
		\textbf{Purely Symbolic System} & A system where the grounding set is empty ($G = \emptyset$). \\
		\textbf{Statically Grounded System} & A system with a fixed, finite grounding set $G$ and a recursive verification relation $\vdash$, attempting to define meaning through derivation. \\
		\textbf{Grounding Set ($G$)} & The subset of symbols or sentences treated as Semantic Axioms, assumed to have direct external reference. \\
		\textbf{Definitional Set ($D(s)$)} & The set of symbols used \textit{directly} to define a specific symbol $s$ (immediate semantic parents). \\
		\textbf{Definitional Closure ($C(s)$)} & The transitive closure of the definitional set; all symbols used directly or indirectly to define $s$ (semantic ancestors). \\
		\textbf{Directly Grounded} & The property of a symbol $s$ that belongs to the grounding set $G$ ($s \in G$). In this formal model, we assume these symbols possess immediate external referents, serving as the axiomatic foundation to avoid infinite regress. \\
		\textbf{Lexical Groundability} & The property of a symbol $s$ such that its definitional closure intersects with the grounding set, i.e., $C(s) \cap G \neq \emptyset$. \\
		\textbf{Formal Groundability} & The property of a complex sentence $\phi$ being derivable from the semantic axioms $G$ within the system's logic ($\mathcal{L} \vdash Groundable_G(\ulcorner \phi \urcorner)$). \\
		\textbf{Semantic Axioms} & Symbols or sentences in $G$ that are treated as primitive truths by the system, mirroring external reality (e.g., "Red is a color"). \\
		\textbf{Internal Meaning (Sense)} & Meaning derived solely from the axioms and syntactic relations within the system (Inferential Role Semantics). \\
		\textbf{External Meaning (Reference)} & The causal relationship between internal symbols and external physical referents. \\
		\textbf{Non-inferable Update} & An axiomatic update to the system (expanding $G$) that cannot be deduced from prior states (The "Grounding Act"). \\
		\textbf{Semantic Deferral} & The state of an ungrounded symbol whose meaning is endlessly deferred to other ungrounded symbols. \\

		\hline
	\end{longtable}
	
	\end{document}